\newcommand{\R}{{\mathbb{R}}}
\newcommand{\ol}{\overline}
\newcommand{\wti}{\widetilde  }
\newcommand{\hatt}{\widehat}
\newcommand{\beq}{\begin{equation}}
\newcommand{\eeq}{\end{equation}}
\newcommand{\bdm}{\begin{displaymath}}
\newcommand{\edm}{\end{displaymath}}
\newcommand{\ba}{\begin{align}}
\newcommand{\ea}{\end{align}}
\newcommand{\bpf}{\begin{proof}}
\newcommand{\epf}{\end{proof}}
\newcommand{\la}{\langle}
\newcommand{\ra}{\rangle}
\newcommand{\supp}{\mathrm{supp}\, }               
\newcommand{\dist}{\mathrm{dist}}               
\newcommand{\veps}{\varepsilon}
\newcommand{\calQ}{\mathcal{Q}}
\newtheorem{thm}{Theorem}
\newtheorem{prop}[thm]{Proposition}
\newtheorem{lem}[thm]{Lemma}
\newtheorem{cor}[thm]{Corollary}
\theoremstyle{definition}
\newtheorem{remark}[thm]{Remark}
\newcounter{theoremi}[thm]
\numberwithin{thm}{section}
\numberwithin{equation}{section}
\begin{document}

\title[Exponential decay of dispersion managed solitons]{Exponential decay of dispersion managed
solitons for vanishing average dispersion}
\author[M.~B.~Erdo\smash{\u g}an, D.~Hundertmark, and Y.-R.~Lee]{M.~Burak Erdo\smash{\u g}an, Dirk Hundertmark,
and Young-Ran~Lee}
\address{Department of Mathematics, Altgeld Hall, University of Illinois at Urbana-Champaign,
    1409 W.~Green Street, Urbana, IL 61801.}%
\email{berdogan@math.uiuc.edu}%
\address{Department of Mathematics, Altgeld Hall, and Institute for Condensed Matter Theory,
            University of Illinois at Urbana-Champaign, 1409 W.~Green Street, Urbana, IL 61801.}%
\email{dirk@math.uiuc.edu}%
\address{Department of Mathematics, Sogang University, Shinsu-dong 1, Mapo-gu, Seoul, 121-742, South Korea.}%
\email{younglee@sogang.ac.kr}

\thanks{\copyright 2009 by the authors. Faithful reproduction of this article,
        in its entirety, by any means is permitted for non-commercial purposes}
\keywords{Gabitov-Turitsyn equation, dispersion managed NLS, exponential decay}
\subjclass[2000]{35B20, 35B40, 35P30 }

\begin{abstract}
We show that any $L^2$ solution of the Gabitov-Turitsyn equation describing dispersion managed solitons decay exponentially in space and frequency domains. This confirms in the affirmative Lushnikov's conjecture of exponential decay of dispersion managed solitons.
\end{abstract}

\maketitle

\section{Introduction}\label{sec1}

Consider the one-dimensional non-linear Schr\"odinger
equation (NLS) with periodically varying dispersion
coefficient
 \beq\label{eq:periodic-NLS}
  i u_t + d(t) u_{xx} +  |u|^2 u = 0.
 \eeq
It describes the amplitude of a signal
transmitted via amplitude modulation of a carrier wave through a
fiber-optical cable where the dispersion is varied periodically
along the fiber, see, e.g., \cite{Agrawal95,SulemSulem,Tetal03}.
In \eqref{eq:periodic-NLS} $t$ corresponds to
the distance along the fiber, $x$ denotes the (retarded) time,  and $d(t)$ is the dispersion along
the waveguide which, for practical purposes, one can assume to be piecewise constant.

In fiber optic cables, the information can be transmitted using localized soliton pulses in allocated time slots;
the presence of a pulse corresponds to ``1'' and the absence of a pulse corresponds to ``0''
in binary format. Solitary solutions exist due to a delicate balance between the dispersion and nonlinearity.
In order to minimize the interaction between the individual pulses, one needs to keep the pulses sufficiently
far apart.  A draw-back of solitary solutions for this application  is that  the soliton solutions with small
support have  large $L^2$ norm unless the dispersion constant is   small.
The technique of dispersion management was invented to overcome this difficulty. The
idea  is to use
alternating sections of constant but (nearly) opposite
dispersion. This introduces a rapidly varying dispersion $d(t)$ with small average dispersion,
leading to well-localized stable soliton-like pulses changing periodically along the fiber. This idea
has been enormously fruitful
(see, {\em e.g.},  \cite{LKC80,AB98, GT96a, GT96b, KH97, Kurtzge93, LK98, MM99, MMGNMG-NV99}).
Record breaking transmission rates had been achieved using this technology \cite{MGLWXK03} which is now widely used commercially.

To study strong dispersion management regime, it is convenient to write
$$d(t)=\frac{1}{\veps}d_0(t/\veps )+d_{av},$$
Here $d_0(t)$ is the mean zero part which we assume to be piecewise constant, and $d_\text{av}$
the average dispersion over one period , and $\veps$ is a small parameter.
Rescaling $t/\veps$ to $t$, the envelope equation takes the form
\begin{equation}\label{eq:GT}
  i u_t + d_0(t) u_{xx} +  \veps d_{\text{av}}u_{xx}+ \veps|u|^2 u =0.
 \end{equation}

Since the full equation \eqref{eq:GT} is very hard to study, Gabitov and Turitsyn suggested to
separate the free motion given
by the solution of $iu_t +d_0(t)u_{xx}=0$ in \eqref{eq:GT}, and  to average
over one period, see \cite{GT96a,GT96b}. In the case\footnote{In fact, our method can be extended to more general dispersion profiles. We will address this issue together with the case $d_\text{av}>0$ in a later paper.} when $d_0(t)=1$ on $[-1,0]$ and $d_0(t)=-1$ on $[0,1]$
this yields the following equation for the ``averaged'' solution $v$
 \begin{align}\label{eq:DMS-time}
  &i v_t + \veps d_{\text{av}} v_{xx} +  \veps Q(v,v,v) = 0,\quad \text{where}\\
 \label{eq:Q3}
  & Q(v_1,v_2,v_3) :=   \int_0^1 T_{r}^{-1}\Big( {T_{r}v_1} \overline{T_{r}v_2} T_{r}v_3 \Big) ds,
  \end{align}
and $T_r= e^{ir\partial_x^2}$. In some sense, $v$ is the  slowly varying part of the amplitude and the varying dispersion is interpreted as a fast background oscillation, justifying formally the above
averaging procedure. This is similar to Kapitza's treatment of the unstable pendulum, see \cite{LandauLifshitz}.
This averaging procedure yielding \eqref{eq:DMS-time} is well-supported by numerical and theoretical
studies, see,
for example, \cite{AB98,TDNMSF99,Tetal03}, and was rigorously justified in
\cite{ZGJT01} in the limit of large local dispersion, i.e., as $\veps \to 0$.

One can find stationary solutions by making the ansatz $v(t,x)= e^{i\veps\omega t} f(x)$  in
\eqref{eq:DMS-time}. This yields the time independent equation
 \begin{equation}\label{eq:GabTur}
  -\omega f = -d_{\text{av}}f_{xx} -Q(f,f,f)
 \end{equation}
describing stationary soliton-like solutions, the so-called dispersion managed
solitons. Despite the enormous interest in dispersion managed solitons, there are few rigorous results
available. One reason for this is that it is a nonlinear and nonlocal equation.
Existence and smoothness of weak solutions of \eqref{eq:GabTur} had first been rigorously established in
\cite{ZGJT01} for positive average dispersion $d_{\text{av}}>0$.
In the case $d_{\text{av}}=0$, the existence was obtained in \cite{Kunze04}, also see \cite{HL2010}
for a simplified proof.
Smoothness in the case $d_{av}=0$ was established in \cite{Stanislavova}.

\begin{remark} By a weak solution we mean
$f\in H^1$ in the case $d_{av}>0$, or $f\in L^2$ in the case $d_{av}=0$, such that
\beq
  - \omega \la g,f\ra = d_{av}\la g^\prime,f^\prime\ra-\la g, Q(f,f,f)\ra.
\eeq
 for all $g\in H^1$. Here $\la g, f\ra= \int_\R \ol{g(x)}f(x)\, dx$
 is the usual scalar product on $L^2(\R)$.

By a formal calculation, using the unicity of $T_r$ in $L^2$, we have
$$\la g, Q(f,f,f)\ra=\calQ(g,f,f,f),$$
where
\beq\label{eq:Qdef}
\calQ(f_1,f_2,f_3,f_4)= \int\limits_0^1\int\limits_\R \ol{T_{r}f_1(x)} T_{r}f_2(x)  \ol{T_{r}f_3(x)} T_{r}f_4(x)  dx ds.
\eeq
The functional $\calQ(f_1,f_2,f_3,f_4)$
 is well-defined for $f_j\in L^2(\R)$ due to Strichartz inequality, see \cite{ZGJT01,HL2008}.
\end{remark}

The decay of the solutions was first addressed by Lusnikov in \cite{lus1}.  He gave convincing but
non-rigorous arguments that any solution $f$ of \eqref{eq:GabTur} for $d_{av}=0$ satisfies
 \begin{equation}\label{eq:luspre}
  f(x)\sim |x|\cos(a_0x^2 + a_1 x + a_2) e^{-b|x|} \quad \text{as }
  x\to\infty
 \end{equation}
for some suitable choice of real constants $a_j $ and $b>0$, see
also \cite{lus2}. In particular, he predicted that $f$ and $\widehat{f}$ decay exponentially at
infinity. For $d_{av}=0$, the first rigorous $x$-space decay bounds were established in
\cite{HL2008}, where it was shown that both $f$ and $\widehat{f}$ decay faster than any polynomial
in the case $d_{\text{av}}=0$. In particular any weak solution is a Schwartz function.

Our main result confirms Lusnikov's exponential decay prediction:
\begin{thm} \label{thm:main} Assume that $d_{av}=0$.
Let $f\in L^2$ be a weak solution of \eqref{eq:GabTur}. Then there exists
$\mu>0$ such that
$$|f(x)|\lesssim e^{-\mu |x|},\quad |\widehat{f}(\xi)|\lesssim e^{-\mu |\xi|},$$
where $\widehat{f}$ is the Fourier transform of $f$.
\end{thm}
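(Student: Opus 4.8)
The plan is to pass to the Fourier side, recast the statement as an exponentially weighted a~priori bound for a fixed point of a trilinear convolution operator, and then beat the resulting eigenvalue obstruction by a commutator (Combes--Thomas type) argument driven by the averaging in the $r$-integral. Throughout, the starting point is that $f\in\mathcal{S}$, so both $f$ and $\widehat f$ already decay faster than any polynomial, by \cite{HL2008}.

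First I would record the Fourier form of the equation. Using $\widehat{T_r\phi}(\xi)=e^{-ir\xi^2}\widehat\phi(\xi)$, a direct computation of the symbol of $Q$ turns $\omega f=Q(f,f,f)$ into an equation for $g:=\widehat f$,
\beq
\omega\, g(\xi)=c\int_\R\int_\R K(uv)\,g(\xi-u)\,\overline{g(\xi-u-v)}\,g(\xi-v)\,du\,dv ,
\eeq
where $K(z)=\tfrac{e^{2iz}-1}{2iz}$ obeys $|K(z)|\le\min(1,|z|^{-1})$; the decay $|K(z)|\lesssim|z|^{-1}$ is exactly the residue of the oscillatory averaging over $r\in[0,1]$ and will be the sole source of smallness. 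Note that on the support of the integral $\xi=(\xi-u)-(\xi-u-v)+(\xi-v)$, and that because $T_r$ is a multiplier in one representation and the Schr\"odinger group in the other, the equation for $\widehat f$ is structurally identical to the one for $f$; the two conclusions $|f(x)|\lesssim e^{-\mu|x|}$ and $|\widehat f(\xi)|\lesssim e^{-\mu|\xi|}$ are therefore symmetric and proved by the same machinery, so I focus on the frequency side.

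Next I would set up the weighted estimate. For $\mu\ge0$ put $w_\mu(\xi)=e^{\mu|\xi|}$, regularized to a bounded $1$-Lipschitz weight so that $\|w_\mu g\|_{L^2}$ is finite a~priori; the goal is to run an open--closed continuity argument on the set of $\mu$ for which $w_\mu g\in L^2$. Subadditivity of the weight along the constraint lets $w_\mu(\xi)$ be moved onto a single factor at the cost of $e^{\mu|u|}$, and the Strichartz-boundedness of $\calQ$ then yields the clean but \emph{cubic} bound $\|w_\mu\,\widehat{Q(f,f,f)}\|\lesssim\|w_\mu g\|\,\|g\|^2$. This is useless as it stands: in the equation it gives only $\omega\|w_\mu g\|\lesssim\|w_\mu g\|\,\|g\|^2$, and $\omega\lesssim\|g\|^2$ by the unweighted identity, so the constant is comparable to $\omega$. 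Equivalently, the frozen linear operator $M h:=Q(f,f,h)$ is Hilbert--Schmidt (hence compact, since $f\in\mathcal{S}$) and has $f$ as an eigenfunction with eigenvalue $\omega\neq0$, so $\|M\|\ge|\omega|>0$: no contraction or soft smallness is available.

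The main obstacle, and the crux of the proof, is to extract a genuine gain that defeats this eigenvalue. The mechanism is to subtract $w_\mu(\xi)$ times the equation from the equation satisfied by $w_\mu g$: the leading, eigenvalue-sized term reproduces $M(w_\mu g)$ and is absorbed on the left by inverting $\omega-M$ on the range of the complementary spectral projection at the isolated eigenvalue $\omega$ (the finite-dimensional projected part being a scalar bound), while the remainder carries the commutator factor
\beq
w_\mu(\xi)-w_\mu(\xi-u)=w_\mu(\xi-u)\big(e^{\mu(|\xi|-|\xi-u|)}-1\big),\qquad \big|e^{\mu(|\xi|-|\xi-u|)}-1\big|\lesssim\mu\,|u|\,e^{\mu|u|} .
\eeq
This factor vanishes where the weight costs nothing ($u\approx0$, the eigenvalue regime) and, for large $|u|$, its growth $e^{\mu|u|}$ is paired against $|K(uv)|\lesssim|uv|^{-1}$ and against the decay of the two frozen factors $g(\xi-u-v),g(\xi-v)$ at the rate already available in the continuity argument. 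Estimating this pairing by an honest bilinear Strichartz inequality with the weight on one slot produces
\beq
\omega\,\|w_\mu g\|_{L^2}\le \Theta(\mu)\,\|w_\mu g\|_{L^2}+C_\mu,\qquad \Theta(\mu)\xrightarrow[\ \mu\to0\ ]{}0 ,
\eeq
with $C_\mu<\infty$ controlled by the already-established decay. Since $\Theta(\mu)<\omega$ for $\mu$ below a threshold $\mu_0$, the set of admissible $\mu$ is open, and being also closed it equals $[0,\mu_0)$, giving $\|e^{\mu|\cdot|}g\|_{L^2}<\infty$ and hence $|\widehat f(\xi)|\lesssim e^{-\mu|\xi|}$; the Fourier symmetry yields $|f(x)|\lesssim e^{-\mu|x|}$. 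I expect the genuinely delicate point to be this commutator step: the quadratic phase in $T_r$ forbids a naive complex-translation (analytic-dilation) argument, and it is precisely the oscillatory averaging over $r\in[0,1]$, surviving as the decay of $K$, that must be exploited quantitatively to make $\Theta(\mu)$ small and close the bootstrap.
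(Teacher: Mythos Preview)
Your diagnosis of the eigenvalue obstruction is correct, and your Fourier representation with kernel $K(z)=(e^{2iz}-1)/(2iz)$ is exactly the starting point the paper uses. But the commutator/spectral route you propose does not close. The remainder carries the factor $|e^{\mu(|\xi|-|\xi-u|)}-1|\lesssim\mu\,|u|\,e^{\mu|u|}$, and neither piece can be absorbed as you describe. The decay $|K(uv)|\lesssim|uv|^{-1}$ trades the linear growth $|u|$ only for a non-integrable $|v|^{-1}$, so the four-linear form with an extra $|u|$ inserted is \emph{not} bounded on $L^2$ (rerun the proof of Proposition~\ref{prop:KM}: on the region $|\eta_1-\eta_2|\ge 1$, $|\eta_2-\eta_3|\ge 1$ the integrand becomes $|\eta_2-\eta_3|^{-1}$ with no decay left in $|\eta_1-\eta_2|$). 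As for $e^{\mu|u|}$: on the constraint $u=\eta_4-\eta_3$, so $e^{\mu|u|}\le e^{\mu|\eta_3|}e^{\mu|\eta_4|}$ can only be paid for by putting weights back onto the two ``frozen'' copies of $g$, which reinstates the cubic dependence on $\|w_\mu g\|$ you set out to avoid; Schwartz decay alone does not make $e^{\mu|\cdot|}g\in L^2$, and invoking exponential decay at a slightly smaller rate is exactly the openness step you are trying to prove. The spectral absorption is also incomplete: inverting $\omega-M$ off the eigenspace controls $P^\perp(w_\mu g)$, but the eigenspace component has size $|\langle w_\mu g,g\rangle|/\|g\|=\|w_{\mu/2}g\|^2/\|g\|$, which is of the same order as the quantity you want and is only finite because the weight has been regularized.

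The paper takes a different path that sidesteps both issues. It never linearizes. Instead it distributes the weight over \emph{all three} input slots via the full triangle inequality $|\eta_1|\le|\eta_2|+|\eta_3|+|\eta_4|$ on the constraint, which makes the twisted multiplier $e^{F(\eta_1)-F(\eta_2)-F(\eta_3)-F(\eta_4)}\le 1$; hence the weighted four-linear form obeys the \emph{same} bound as the unweighted one, uniformly in $\mu,\veps$ (Theorem~\ref{thm:twistedQ}). Smallness is then produced not by sending $\mu\to0$ but by a spatial cutoff at scale $\tau$: writing $f=f_{\ll}+f_{\sim}+f_{>}$, the dangerous cross terms containing $h_>$ together with an $h_{\ll}$ have supports separated by $\sim\tau$, and a bilinear refinement (Theorem~\ref{thm:twistedQ-loc}) extracts an extra $\tau^{-1/2}$; the residual piece carries $\|f_{\sim}\|$, which is small because $f\in L^2$. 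One obtains a genuinely cubic inequality $\omega\nu\lesssim\nu^3+e^{\mu\tau}\nu^2+o_\tau(1)\,\nu+o_\tau(1)$ for $\nu=\|e^{F_{\mu,\veps}}f_>\|_2$, fixes $e^{\mu\tau}=2$, and uses continuity in the regularization $\veps$ to trap $\nu$ in the bounded component of $G^{-1}([0,G(\nu_0)])$ for a cubic $G$. In short: move the weight across all three slots rather than one, and replace the Combes--Thomas small-$\mu$ gain by a large-$\tau$ support-separation gain; the latter is what your kernel decay $|K|\lesssim|uv|^{-1}$ actually buys, and it does so \emph{without} introducing an uncontrolled $|u|$ factor.
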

We have the following immediate corollary.
\begin{cor}\label{cor:analyticity}
Under the conditions of Theorem~\ref{thm:main}, both $f$ and $\widehat f$ are
analytic in a strip containing the real line.
\end{cor}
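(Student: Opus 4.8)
The plan is to deduce analyticity directly from the two exponential bounds of Theorem~\ref{thm:main} by means of the Fourier inversion formula; this is the standard Paley--Wiener type argument. Since $f\in L^2$ and $|\widehat f(\xi)|\lesssim e^{-\mu|\xi|}$, the bound in particular forces $\widehat f\in L^1$, so Fourier inversion represents $f$ pointwise (after modification on a null set) as
$$f(x)=\frac{1}{\sqrt{2\pi}}\int_\R e^{ix\xi}\,\widehat f(\xi)\,d\xi.$$
First I would replace the real variable $x$ by a complex variable $z=x+iy$ and consider the candidate extension
$$F(z):=\frac{1}{\sqrt{2\pi}}\int_\R e^{iz\xi}\,\widehat f(\xi)\,d\xi .$$

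Second, I would verify that this integral converges absolutely and locally uniformly on the open strip $S_\mu:=\{z\in\C:|\im z|<\mu\}$. Indeed $|e^{iz\xi}|=e^{-(\im z)\xi}$, so the integrand is bounded in modulus by a constant times $e^{-(\im z)\xi-\mu|\xi|}$, and for every $z$ with $|\im z|\le\mu-\delta$ (any fixed $\delta\in(0,\mu)$) this is dominated by the integrable majorant $C\,e^{-\delta|\xi|}$. Analyticity of $F$ on $S_\mu$ then follows either from Morera's theorem combined with Fubini, or equivalently by differentiating under the integral sign, the interchange being justified by the uniform integrable majorant just obtained. By Fourier inversion $F$ agrees with $f$ on the real axis, so $F$ furnishes the desired analytic extension of $f$ to the strip $S_\mu$, which contains $\R$.

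Finally, the statement for $\widehat f$ follows by the identical argument with the roles of $f$ and $\widehat f$ interchanged: the companion bound $|f(x)|\lesssim e^{-\mu|x|}$ shows that
$$\widehat f(\zeta):=\frac{1}{\sqrt{2\pi}}\int_\R e^{-i\zeta x}f(x)\,dx$$
extends analytically to the same strip $|\im\zeta|<\mu$. I do not expect any genuine obstacle, since both exponential bounds are already supplied by Theorem~\ref{thm:main}; the only points requiring a little care are the passage to a slightly narrower strip $|\im z|\le\mu-\delta$ to secure a uniform integrable majorant, and the appeal to Fourier inversion to identify the boundary values of the extensions on $\R$ with $f$ and $\widehat f$ respectively.
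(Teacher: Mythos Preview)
Your argument is correct and is exactly the standard Paley--Wiener reasoning that the paper has in mind; in fact the paper gives no explicit proof at all, simply labeling the statement an ``immediate corollary'' of Theorem~\ref{thm:main}. Your write-up supplies precisely the details one would fill in.
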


\begin{remark}
Weak solutions of \ref{eq:GabTur} can be found with the help of a variational principle. For $d_\mathrm{av}=0$ it is given by
  \beq \label{eq:best-constant}
    P_\lambda:= \sup\{ \calQ(f,f,f,f)\vert \, \|f\|^2_2=\lambda\}
  \eeq
 Note $\calQ(f,f,f,f)= \int_0^1\int |e^{it\partial_x^2}f(x)|^4\, dx dt =  \|e^{it\partial_x^2}f\|^4_{L^4_{[0,1]}L^4_x}$ and $e^{it\partial_x^2}f$ is the space-time Fourier transform of a measure concentrated on the paraboloid $\{\tau=k^2\}\subset\R^2$ with square-integrable density $\hat{f}$,
  \bdm
    e^{it\partial_x^2}f(x) =
      \frac{1}{2\pi}\iint e^{ixk +i t\tau} \delta(\tau-k^2) \hat{f}(k)\, d\tau dk .
  \edm
Since, by scaling, $P_\lambda= P_1\lambda^2$, the variational problem \eqref{eq:best-constant} yields the best constant in the $L^4$ Fourier extension estimate
  \beq \label{eq:extension-estimate}
    \|e^{it\partial_x^2}f\|^4_{L^4_{[0,1]}L^4_x} \le P_1\|f\|^4_{L^2(\R)} .
  \eeq
  for measure with an $L^2$ density on the paraboloid.
Existence of maximizers for the variational problem \eqref{eq:best-constant} was established in \cite{Kunze04}, see also \cite{HL2010}. Thus our Theorem \ref{thm:main} and Corollary \ref{cor:analyticity} for any weak solutions of \eqref{eq:GabTur} for vanishing average dispersion show, in particular, strong regularity properties for any maximizer of the Fourier extension estimate \eqref{eq:extension-estimate}.

The inequality \eqref{eq:extension-estimate} is, of course, closely related to the one-dimensional Strichartz inequality
 \beq\label{eq:Strichartz}
   \|e^{it\partial_x^2}f\|_{L^6_tL^6_x} \le S_1\|f\|_{L^2(\R)} ,
 \eeq
for which the sharp constant and the maximizers have been classified in \cite{BBCH09,Foschi,HunZha}, and the Fourier extension problem for the sphere for which the existence of maximizers and their properties has recently been discussed in \cite{CS2010}.
\end{remark}

In the proof of Theorem~\ref{thm:main}, the central idea is, as in \cite{Agm, HunSig}, to obtain suitable exponentially weighted a-priori estimates for the weak solution.
In \cite{Agm, HunSig}, the commutator of the exponential weight with the Schr\"odinger operator is easily calculated since the operator is local. Variations of Agmon's method work also for relativistic Schr\"odinger operators which are nonlocal. However, in these applications one relies on the pointwise decay of the corresponding kernel. In our case, a major difficulty arises since our operator $Q$ is nonlocal and the kernel of the free Schr\"odinger evolution has no pointwise decay.
We overcome this difficulty by using the multi-linear structure
of $\calQ$ and the oscillation of the kernel. This is done in Section~\ref{sec:preparations}, where we obtain exponentially weighted multi-linear estimates for $\calQ$. Multi-linear refinements of the Strichartz estimate where first established in \cite{Bourgain1998} and later systematically studied in \cite{Tao2001}. The results of these two papers focus, however, on the Fourier side and do not allow exponential weights. More importantly, we require bounds \emph{independent} of the exponential weights, see Theorems \ref{thm:twistedQ} and \ref{thm:twistedQ-loc} below.
Our bounds are refinements of the $x$-space Strichartz estimates which were developed in \cite{HL2008} and used in conjunction with well-known Fourier space Strichartz estimates to prove that any weak solution is a Schwartz function. We would also like to note that our proof  uses only the fact that $f\in L^2$ and as such does not require any of the previous smoothness or decay results.

\section{A-priori estimates for $\calQ$} \label{sec:preparations}

We start with two alternative representations of $\calQ$ inspired by the
calculations in \cite{HunZha}.
\begin{lem}\label{lem:repr}
\begin{align}\label{eq:Qalt1}
&\calQ(f_1,f_2,f_3,f_4)=\\
&\quad \frac{1}{4\pi}\int\limits_{0}^1 \int\limits_{\R^4} \frac{1}{t} e^{-i(\eta_1^2-\eta_2^2 +\eta_3^2 -\eta_4^2)/(4t)}
      \ol{f_1(\eta_1)} f_2(\eta_2)\ol{f_3(\eta_3)} f_4(\eta_4)
        \delta(\eta_1 - \eta_2 + \eta_3 - \eta_4)\, d\eta d t  \nonumber
\\\label{eq:Qalt2}
&\calQ(f_1,f_2,f_3,f_4)=\\
&\quad\frac{1}{2\pi}\int\limits_{0}^1 \int\limits_{\R^4}  e^{it(\eta_1^2-\eta_2^2 +\eta_3^2 -\eta_4^2) }
      \ol{\widehat{f_1}(\eta_1)} \widehat{f_2}(\eta_2)\ol{\widehat{f_3}(\eta_3)} \widehat{f_4}(\eta_4)
        \delta(\eta_1 - \eta_2 + \eta_3 - \eta_4)\, d\eta d t \nonumber
\end{align}
\end{lem}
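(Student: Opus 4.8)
The plan is to derive both identities from the two standard descriptions of the free propagator $T_s=e^{is\partial_x^2}$: its Fourier multiplier form, which yields \eqref{eq:Qalt2}, and its explicit convolution kernel, which yields \eqref{eq:Qalt1}. In each case the recipe is the same: substitute the chosen form of $T_sf_j$ into the definition \eqref{eq:Qdef}, perform the $x$-integration \emph{before} the $\eta$-integrations, and recognize the resulting oscillatory integral $\int_\R e^{\pm ix(\eta_1-\eta_2+\eta_3-\eta_4)}\,dx=2\pi\,\delta(\eta_1-\eta_2+\eta_3-\eta_4)$ as the momentum-conservation delta. The complex conjugations on the factors $j=1,3$ are what supply the alternating signs in the quadratic phases $\eta_1^2-\eta_2^2+\eta_3^2-\eta_4^2$.

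For \eqref{eq:Qalt2} I would insert the Fourier representation $T_sf_j(x)=(2\pi)^{-1/2}\int_\R e^{ix\eta_j-is\eta_j^2}\,\widehat{f_j}(\eta_j)\,d\eta_j$ (using its conjugate for $j=1,3$) into \eqref{eq:Qdef}. Collecting the four exponentials produces a spatial phase $e^{-ix(\eta_1-\eta_2+\eta_3-\eta_4)}$ and a temporal phase $e^{is(\eta_1^2-\eta_2^2+\eta_3^2-\eta_4^2)}$, the sign reversals on the conjugated factors being precisely what generates the alternating form. Carrying out the $x$-integration produces $2\pi\,\delta(\eta_1-\eta_2+\eta_3-\eta_4)$, which combines with the $(2\pi)^{-2}$ from the four Fourier factors to leave the overall constant $(2\pi)^{-1}$, reproducing \eqref{eq:Qalt2}.

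For \eqref{eq:Qalt1} I would instead use the Schr\"odinger kernel, writing $T_sf(x)=(4\pi is)^{-1/2}\int_\R e^{i(x-\eta)^2/(4s)}f(\eta)\,d\eta$. Substituting the four copies and expanding $(x-\eta_j)^2=x^2-2x\eta_j+\eta_j^2$, the alternating signs make the $x^2$ contributions cancel; what survives is the linear phase $\tfrac{x}{2s}(\eta_1-\eta_2+\eta_3-\eta_4)$ in $x$ together with the $x$-independent factor $e^{-i(\eta_1^2-\eta_2^2+\eta_3^2-\eta_4^2)/(4s)}$. Since the prefactors combine to $|4\pi is|^{-2}=(4\pi s)^{-2}$ and the $x$-integral gives $\int_\R e^{ix(\eta_1-\eta_2+\eta_3-\eta_4)/(2s)}\,dx=4\pi s\,\delta(\eta_1-\eta_2+\eta_3-\eta_4)$, the net weight is $(4\pi s)^{-1}=\tfrac{1}{4\pi}\cdot\tfrac1s$, which is exactly the factor in \eqref{eq:Qalt1}.

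The one genuine obstacle is that, for general $L^2$ data, the Fourier and kernel integrals need not converge absolutely and the delta arises only distributionally, so the interchange of the $x$- and $\eta$-integrations must be justified. I would do this by first establishing both identities for $f_j$ in the Schwartz class, where $T_sf_j$ is smooth and rapidly decaying for each fixed $s\in(0,1]$, Fubini applies, and the delta is used in its standard pairing against the smooth, rapidly decaying density $\overline{f_1}f_2\overline{f_3}f_4$. Both sides of \eqref{eq:Qalt1} and \eqref{eq:Qalt2}, like $\calQ$ itself, are bounded four-linear forms on $L^2(\R)$: a H\"older estimate in $(s,x)$ bounds each by a product of the norms $\|e^{is\partial_x^2}f_j\|_{L^4_{[0,1]}L^4_x}$, which are controlled by the extension estimate \eqref{eq:extension-estimate}. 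Density of the Schwartz class in $L^2$ then upgrades the identities from Schwartz data to arbitrary $f_j\in L^2(\R)$, completing the proof; the content is thus entirely in this limiting and Fubini bookkeeping rather than in any single computation.
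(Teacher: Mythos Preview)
Your proposal is correct and follows essentially the same route as the paper: insert the kernel representation of $T_t$ for \eqref{eq:Qalt1} and the Fourier multiplier representation for \eqref{eq:Qalt2}, expand the phases so the alternating conjugations cancel the $x^2$ terms (respectively produce the alternating quadratic form), and perform the $x$-integral to obtain the $\delta$. The only difference is that you add a density-and-Strichartz justification for the Fubini step, which the paper omits entirely and simply treats as a formal computation.
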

\begin{proof}
Using the formula
$$T_tf(x)=\frac{1}{\sqrt{4\pi i t}}\int\limits_\R e^{i(x-y)^2/(4t)}f(y) dy,$$
we get
\begin{align*}
&\ol{T_tf_1(x)} T_tf_2(x)  \ol{T_tf_3(x)} T_tf_4(x)=\\
&\frac{1}{(4\pi  t)^2} \int\limits_{\R^4} e^{ix(\eta_1-\eta_2+\eta_3-\eta_4)/(2t)}e^{-i(\eta_1^2-\eta_2^2 +\eta_3^2 -\eta_4^2)/(4t)}\ol{f_1(\eta_1)} f_2(\eta_2)\ol{f_3(\eta_3)} f_4(\eta_4)  d\eta.
\end{align*}
From which one obtains \eqref{eq:Qalt1} by performing the $x$-integration.

Similarly, one obtains \eqref{eq:Qalt2} by using the formula
$$T_tf(x)=\frac{1}{\sqrt{2\pi }}\int\limits_\R e^{ix\xi}e^{-it\xi^2}\hatt{f}(\xi) d\xi.$$
\end{proof}

To obtain exponential decay of dispersion managed solitons we need the following
`twisted' dispersion management functionals
 \begin{align*}
    \calQ_{\mu,\veps}(h_1,h_2,h_3,h_4)
    & :=
        \calQ(e^{F_{\mu,\veps}(X)}h_1,e^{-F_{\mu,\veps}(X)}h_2,e^{-F_{\mu,\veps}(X)}h_3,e^{-F_{\mu,\veps}(X)}h_4) \\
    \wti{\calQ}_{\mu,\veps} (h_1,h_2,h_3,h_4)
    &:=
        \calQ(e^{F_{\mu,\veps}(P)}h_1,e^{-F_{\mu,\veps}(P)}h_2 ,e^{-F_{\mu,\veps}(P)}h_3 ,e^{-F_{\mu,\veps}(P)}h_4).
 \end{align*}
Here $X$ denotes multiplication by $x$ and $P=-i\partial_x$ is the one-dimensional momentum operator, and
 \beq \label{eq:Fmueps}
    F_{\mu,\veps}(x):= \mu \frac{|x|}{1+\veps|x|}, \quad \mu,\veps \ge 0.
 \eeq
We have the following theorems  which are rather surprising at first sight. They are the basis for our proof of exponential decay of dispersion management solitons.

\begin{thm}\label{thm:twistedQ} There exists a constant $C$ such that the bounds
 \begin{align}
   | \calQ_{\mu,\veps}(h_1,h_2,h_3,h_4)| &\le C \prod_{j=1}^4 \|h_j\|_2 ,\\
   | \wti{\calQ}_{\mu,\veps}(h_1,h_2,h_3,h_4)| &\le C \prod_{j=1}^4 \|h_j\|_2
 \end{align}
 hold for all $\mu,\veps\ge 0$.
\end{thm}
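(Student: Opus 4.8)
The plan is to read off the weighted functionals from the two kernel representations of Lemma~\ref{lem:repr}: I would use \eqref{eq:Qalt1} for $\calQ_{\mu,\veps}$ and \eqref{eq:Qalt2} for $\wti\calQ_{\mu,\veps}$, so that in both cases the four exponential weights collapse, on the constraint hypersurface $\eta_1-\eta_2+\eta_3-\eta_4=0$, into the single real factor
\[
 W(\eta)=\exp\bigl(F_{\mu,\veps}(\eta_1)-F_{\mu,\veps}(\eta_2)-F_{\mu,\veps}(\eta_3)-F_{\mu,\veps}(\eta_4)\bigr).
\]
For $\wti\calQ_{\mu,\veps}$ the weights $e^{\pm F_{\mu,\veps}(P)}$ act by multiplication by $e^{\pm F_{\mu,\veps}(\eta_j)}$ on $\widehat{h_j}$, so the structure is identical and $\|\widehat{h_j}\|_2=\|h_j\|_2$. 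The first, and conceptually decisive, step is that $W\le 1$ pointwise on this hypersurface, uniformly in $\mu,\veps$: writing $F_{\mu,\veps}(x)=g(|x|)$ with $g(r)=\mu r/(1+\veps r)$ concave and increasing on $[0,\infty)$ with $g(0)=0$, any such $g$ is subadditive, and together with $|\eta_1|=|\eta_2-\eta_3+\eta_4|\le|\eta_2|+|\eta_3|+|\eta_4|$ this gives $F(\eta_1)\le F(\eta_2)+F(\eta_3)+F(\eta_4)$. This is exactly what makes the bound independent of the weights.

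Next I would resolve the constraint and simplify the phase. On $\{\eta_1-\eta_2+\eta_3-\eta_4=0\}$ one has $\eta_1^2-\eta_2^2+\eta_3^2-\eta_4^2=2uw$ with $u:=\eta_1-\eta_2$, $w:=\eta_2-\eta_3$, and $(\eta_1,\eta_2,\eta_3)\mapsto(u,v,w)$ with $v:=\eta_2$ is unimodular. Performing the $t$-integration once and for all produces a fixed, weight-independent kernel: for $\calQ_{\mu,\veps}$ this is $K(uw/2)$ with $K(a)=\int_1^\infty \tau^{-1}e^{-ia\tau}\,d\tau$, satisfying $|K(a)|\lesssim \min\{1+\log_+(1/|a|),\,1/|a|\}$ (log-integrable at $0$, integration by parts at $\infty$); for $\wti\calQ_{\mu,\veps}$ it is the bounded kernel $(e^{2iuw}-1)/(2iuw)$, dominated by $\min\{1,1/|uw|\}$. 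In both cases the singular $1/t$ factor of \eqref{eq:Qalt1} has been absorbed into $K$.

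With the kernel in hand I would bound $|W|\le1$, pass to absolute values, and apply Cauchy–Schwarz in $v$ with the pairing $(1,3)$–$(2,4)$. This produces the cross-correlations $A(u+w)=\int|h_1(v+u)|^2|h_3(v-w)|^2\,dv$ and $B(u-w)=\int|h_2(v)|^2|h_4(v+u-w)|^2\,dv$, with $\|A\|_1=\|h_1\|_2^2\|h_3\|_2^2$ and $\|B\|_1=\|h_2\|_2^2\|h_4\|_2^2$. The rotation $p=u+w$, $q=u-w$ turns $uw$ into $(p^2-q^2)/4$ and leaves
\[
 |\calQ_{\mu,\veps}(h_1,h_2,h_3,h_4)|\lesssim \int_{\R^2}\bigl|K\bigl(\tfrac{p^2-q^2}{8}\bigr)\bigr|\,A(p)^{1/2}B(q)^{1/2}\,dp\,dq ,
\]
after which a Schur test closes the estimate, the crucial input being $\sup_p\int_\R|K((p^2-q^2)/8)|\,dq<\infty$ and its $q$-symmetric counterpart: for large $q$ the argument is $\sim -q^2/8$ where $|K|\lesssim 1/q^2$, while the logarithmic singularity along $q=\pm p$ is integrable.

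The main obstacle is reconciling the weight with the oscillation. One cannot discard $W$ and estimate the integrand in absolute value from the outset, because the $1/t$ factor is only conditionally convergent and taking moduli there diverges. The resolution is to integrate in $t$ first—legitimate for $\veps>0$, where the weights are bounded so that a routine Fubini/limiting argument applies, and to let the uniform bound pass to the limit $\veps\downarrow0$—so that only afterwards, against the honest kernel $K$, are $|W|\le1$ and absolute values used. The second subtlety is the choice of pairing: the naive split $(1,2)$–$(3,4)$ severs precisely the constraint that forces $W\le1$ and leaves a non-integrable $w$-kernel, whereas the $(1,3)$–$(2,4)$ pairing combined with $p=u\pm w$ is exactly what renders the fixed kernel Schur-integrable. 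I expect verifying $W\le1$ (the subadditivity observation) and this Schur integrability of $K$ in the rotated variables to be the heart of the argument.
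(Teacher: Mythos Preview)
Your approach is correct and shares the decisive observation with the paper: on the constraint $\eta_1-\eta_2+\eta_3-\eta_4=0$ the weight factor $W$ (the paper's $M_{\mu,\veps}$) is bounded by $1$ uniformly in $\mu,\veps$, by subadditivity of $F_{\mu,\veps}$. From there the organizations diverge. The paper does not integrate in $t$ first to produce a single kernel; instead it splits $\int_0^1 t^{-1}e^{-ia(\eta)/4t}\,dt$ at $t=|a(\eta)|$, obtaining two pieces $I_1$ (controlled by $\min(1,|a|^{-1})$ via integration by parts) and $I_2$ (the residual $\int_{|a|\le t} t^{-1}$), and then handles each by direct region-splitting in the $(\eta_1-\eta_2,\eta_2-\eta_3)$ variables with several elementary Cauchy--Schwarz applications. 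Your route---perform the $t$-integral once to get a fixed kernel $K$, then a single Cauchy--Schwarz in $v$ with the $(1,3)$--$(2,4)$ pairing, then the rotation $p=u+w$, $q=u-w$ and a Schur test---is more streamlined and makes transparent why the pairing matters (the cross-correlations depend only on $p$ and $q$ separately). What the paper's case-by-case layout buys is that it is already positioned for the refined bound of Theorem~\ref{thm:twistedQ-loc}: the same functionals $I_1(A), I_2(A)$ are reused on the restricted sets $J^\tau_{l,k}$ to extract the $\tau^{-1/2}$ gain, whereas your Schur-test packaging would need to be reopened for that. One small point to tighten in your write-up: the uniform Schur bound $\sup_p\int|K((p^2-q^2)/8)|\,dq<\infty$ is true but not entirely automatic---the logarithmic singularity of $K$ sits on $q=\pm p$ and one should check uniformity as $p$ varies (it follows from $\int|\log|s||\,ds$ being locally bounded); this is the analogue of the paper's separate treatment of $I_2$.
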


\begin{thm}\label{thm:twistedQ-loc}
There exists a constant $C$ such that if for some $l,k\in\{1,2,3,4\}$
 $\tau=\dist(\supp(h_l), \supp(h_k)) \ge 1$  then
 \beq
    |\calQ_{\mu,\veps}(h_1,h_2,h_3,h_4)|\le \frac{C}{\sqrt{\tau}} \prod_{j=1}^4 \|h_j\|_2
 \eeq
 for all $\mu,\veps\ge 0$. Moreover, if $\tau=\dist(\supp(\hatt{h_l}), \supp(\hatt{h_k}))\ge 1$
 then also
 \beq
    |\wti{\calQ}_{\mu,\veps}(h_1,h_2,h_3,h_4)|\le \frac{C}{\sqrt{\tau}} \prod_{j=1}^4 \|h_j\|_2
 \eeq
\end{thm}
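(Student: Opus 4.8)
The plan is to prove both inequalities in parallel, using representation \eqref{eq:Qalt1} for $\calQ_{\mu,\veps}$ (whose weight $e^{\pm F_{\mu,\veps}(X)}$ acts as a multiplier in the physical variables $\eta_j$) and \eqref{eq:Qalt2} for $\wti\calQ_{\mu,\veps}$ (whose weight $e^{\pm F_{\mu,\veps}(P)}$ acts as a multiplier in the frequency variables $\eta_j$). In both cases inserting the weights multiplies the integrand by the single scalar factor $W=e^{F_{\mu,\veps}(\eta_1)-F_{\mu,\veps}(\eta_2)-F_{\mu,\veps}(\eta_3)-F_{\mu,\veps}(\eta_4)}$. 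The decisive observation is that $F_{\mu,\veps}(x)=g(|x|)$ with $g(s)=\mu s/(1+\veps s)$ concave and vanishing at $0$, hence subadditive; together with $\eta_1=\eta_2-\eta_3+\eta_4$ on the support of the delta this gives $F_{\mu,\veps}(\eta_1)\le F_{\mu,\veps}(\eta_2)+F_{\mu,\veps}(\eta_3)+F_{\mu,\veps}(\eta_4)$, i.e. $W\le 1$ for all $\mu,\veps\ge 0$. This is exactly what produces the uniformity in $\mu,\veps$, and it is also why one cannot simply split $\calQ$ into two bilinear Strichartz factors: the individual weights $e^{\pm F_{\mu,\veps}}$ are unbounded and only the fully coupled product is controlled. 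So I would keep all four functions together and, after bounding $W\le 1$ pointwise, reduce everything to a weight-free estimate.

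Concretely, carrying out the $t$-integration first turns \eqref{eq:Qalt2} into $\wti\calQ_{\mu,\veps}=\tfrac{1}{2\pi}\int_{\R^3}K(\Phi)\,W\,\overline{\widehat{h_1}}\widehat{h_2}\,\overline{\widehat{h_3}}\widehat{h_4}\,d\eta_1 d\eta_2 d\eta_3$, where $\eta_4=\eta_1-\eta_2+\eta_3$, $\Phi=\eta_1^2-\eta_2^2+\eta_3^2-\eta_4^2$, and $K(\Phi)=\int_0^1 e^{it\Phi}\,dt$ satisfies $|K(\Phi)|\le\min(1,2/|\Phi|)$. On the constraint one has the factorization $\Phi=2(\eta_1-\eta_2)(\eta_2-\eta_3)$, with $\eta_1-\eta_2=\eta_4-\eta_3=:P_1$ and $\eta_2-\eta_3=\eta_1-\eta_4=:P_2$. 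Using $W\le 1$ and $|K(\Phi)|\le\min(1,1/(|P_1||P_2|))$ I would reduce to the weight-free bound $I:=\int_{\R^3}\min\!\big(1,\tfrac{1}{|P_1||P_2|}\big)\prod_{j}|\widehat{h_j}|\,d\eta_1 d\eta_2 d\eta_3\lesssim\prod_j\|h_j\|_2$, with the sharpened version carrying the extra $\tau^{-1/2}$. For $\calQ_{\mu,\veps}$ the same scheme applies to \eqref{eq:Qalt1}; the only difference is that the $1/t$ factor produces the kernel $\int_{1/4}^\infty \sigma^{-1}e^{-i\sigma\Phi}\,d\sigma$, which obeys the same $1/|\Phi|$ decay but has an (integrable) logarithmic singularity at $\Phi=0$, harmless for what follows.

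To estimate $I$ I would pass to variables $(\eta_2,P_1,P_2)$ (Jacobian one), apply Cauchy--Schwarz in $\eta_2$ pairing $|\widehat{h_1}||\widehat{h_3}|$ against $|\widehat{h_2}||\widehat{h_4}|$, and recognize the two resulting factors as correlation functions $R_{13}(P_1+P_2)$ and $R_{24}(P_1-P_2)$, where $R_{13}(s)=\int|\widehat{h_1}(\eta+s)|^2|\widehat{h_3}(\eta)|^2 d\eta$ has $\|R_{13}\|_1=\|h_1\|_2^2\|h_3\|_2^2$ (similarly $R_{24}$). A second Cauchy--Schwarz in $(P_1,P_2)$ then bounds $I$ by $\|g\|_\infty\prod_j\|h_j\|_2$, where $g(r)=\int\min(1,\tfrac{1}{|p||r-p|})\,dp$ is bounded; this reproves Theorem~\ref{thm:twistedQ}. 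For the localized bound the support separation forces $|P_1|\ge\tau$ (pairs $\{1,2\},\{3,4\}$), $|P_2|\ge\tau$ (pairs $\{2,3\},\{1,4\}$), or, for the two remaining pairs $\{1,3\},\{2,4\}$, $|P_1\pm P_2|\ge\tau$ and hence $\max(|P_1|,|P_2|)\ge\tau/2$. In each clean case one repeats the computation with the domain of $p$ restricted to $|p|\ge\tau$, which replaces $g$ by $g_\tau(r)=\int_{|p|\ge\tau}\min(1,\tfrac{1}{|p||r-p|})\,dp$; a direct estimate gives $\|g_\tau\|_\infty\lesssim\tau^{-1}\log\tau\lesssim\tau^{-1/2}$, which is precisely the claimed gain. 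The two remaining pairs are handled by splitting the domain into $\{|P_1|\ge\tau/2\}\cup\{|P_2|\ge\tau/2\}$ and applying the clean estimate (symmetric in $P_1,P_2$) on each piece.

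The main obstacle is conceptual rather than computational: because the weights $e^{\pm F_{\mu,\veps}}$ are unbounded, no factorization into separate Strichartz pieces is available, and one is forced to exploit the exact cancellation $W\le 1$ inside the fully coupled four-linear integral while still extracting the oscillation that makes the $t$-integral---and then the spatial integral---converge. The technically delicate points are (i) proving the kernel bounds, in particular controlling the logarithmic singularity of the physical-side kernel at $\Phi=0$, and (ii) obtaining the $\tau^{-1/2}$ gain for the ``non-adjacent'' pairs $\{1,3\}$ and $\{2,4\}$, where the separation controls only $|P_1\pm P_2|$ rather than an individual factor of $\Phi$, which necessitates the domain decomposition above.
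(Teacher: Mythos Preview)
Your proposal is correct and follows a route that is close in spirit to the paper's but organized differently. Both arguments rest on the same two pillars: the pointwise bound $W\le 1$ on the constraint surface (which is exactly the paper's observation that $M_{\mu,\veps}\le 1$), and the factorization $\Phi=2P_1P_2$ on that surface. The difference is in how the remaining weight-free integral is estimated.

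The paper does not integrate out $t$ to obtain a single kernel. Instead it splits the $t$-integral at $t=|a(\eta)|$, treats the oscillatory piece by the integration-by-parts bound $|\int_0^b t^{-1}e^{-ia/(4t)}\,dt|\le 8|b|/|a|$ (which produces the $\min(1,|a|^{-1})$ weight in $I_1$), and handles the non-oscillatory tail by Fubini (the $I_2$ term). It then estimates $I_1$ and $I_2$ by explicit case splittings in the $\eta$-variables, using Cauchy--Schwarz directly on each region; the $\tau^{-1/2}$ comes from bounding $\min(1,|a|^{-1})$ by $|a|^{-1/2}$ on the region $A_1^\tau$.

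Your approach is more streamlined: you integrate $t$ first, pass to the variables $(\eta_2,P_1,P_2)$, and use two Cauchy--Schwarz steps to reduce everything to the scalar function $g_\tau(r)=\int_{|p|\ge\tau}|K(2p(r-p))|\,dp$. This buys you a cleaner bookkeeping and in fact a stronger bound, $\|g_\tau\|_\infty\lesssim\tau^{-1}\log\tau$, than the paper's $\tau^{-1/2}$. The logarithmic singularity of the physical-side kernel at $\Phi=0$ is indeed harmless here because the region $\{|p|\ge\tau,\ |p||r-p|\lesssim 1\}$ has measure $\lesssim 1/|r|\lesssim 1/\tau$. Your treatment of the ``non-adjacent'' pairs $\{1,3\}$, $\{2,4\}$ via the decomposition $\{|P_1|\ge\tau/2\}\cup\{|P_2|\ge\tau/2\}$ matches the paper's triangle-inequality reduction $J_{1,3}^\tau\subset J_{1,2}^{\tau/2}\cup J_{2,3}^{\tau/2}$.
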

\begin{remark}
    The point of Theorems \ref{thm:twistedQ} and \ref{thm:twistedQ-loc} is that the constant in the bounds
    is \emph{independent} of $\mu,\veps\ge 0$. We explicitly allow $\veps=0$, which at first seems to be in
    conflict with the fact that we need $e^{F_{\mu,0}}h_1\in L^2$. However, in this case we can restrict
    ourselves to compactly supported functions $h_1$ and then use the a-priori bound and the density of
    these functions in $L^2$.
\end{remark}

Let $M$ be a multiplier in the variables $\eta_1, \eta_2,\eta_3,\eta_4$ and define the
oscillatory functionals
\begin{align*}
    K^{1}_M &(h_1,h_2,h_3,h_4):= \\
        &\int\limits_{0}^1 \int\limits_{\R^4} \frac{1}{t} e^{-i(\eta_1^2-\eta_2^2 +\eta_3^2 -\eta_4^2)/(4t)}
        M(\eta) \ol{h_1(\eta_1)} h_2(\eta_2)\ol{h_3(\eta_3)} h_4(\eta_4)
        \delta(\eta_1 - \eta_2 + \eta_3 - \eta_4) \, d\eta d t  \\
    K^{2}_M &(h_1,h_2,h_3,h_4) := \\
        &\int\limits_{0}^1 \int\limits_{\R^4} e^{it(\eta_1^2-\eta_2^2 +\eta_3^2 -\eta_4^2)}
        M(\eta ) \ol{h_1(\eta_1)} h_2(\eta_2)\ol{h_3(\eta_3)} h_4(\eta_4)
        \delta(\eta_1 - \eta_2 + \eta_3 - \eta_4)\, d\eta d t
\end{align*}
Note that by Lemma~\ref{lem:repr}, we can rewrite the twisted functionals as
\begin{align*}
    \calQ_{\mu,\veps} (h_1,h_2,h_3,h_4)&= \frac{1}{4\pi}K^{1}_{M_{\mu,\veps}}(h_1,h_2,h_3,h_4), \\
    \wti{\calQ}_{\mu,\veps}(h_1,h_2,h_3,h_4) & =\frac{1}{2\pi} K^{2}_{M_{\mu,\veps}}(\hatt{h_1},\hatt{h_2},\hatt{h_3},\hatt{h_4}),
\end{align*}
where
$$M_{\mu,\veps}(\eta)=e^{F_{\mu,\veps}(\eta_1) -F_{\mu,\veps}(\eta_2) -F_{\mu,\veps}(\eta_3) -F_{\mu,\veps}(\eta_4)}.$$
Note that by the triangle inequality the function $M_{\mu,\veps}$  is bounded by $1$ on
the set $\eta_1-\eta_2+\eta_3-\eta_4=0$ for any $\mu, \veps\ge 0$.
Therefore Theorems \ref{thm:twistedQ} and \ref{thm:twistedQ-loc} follow
immediately from the Propositions \ref{prop:KM} and \ref{prop:KM-loc} below.
\begin{prop}\label{prop:KM}
Let $\wti{M}:= \sup_{\eta_1-\eta_2+\eta_3-\eta_4=0} M(\eta_1,\eta_2,\eta_3,\eta_4)<\infty$. Then
$K^{n}_M$, $n=1,2$, is well-defined for all $h_j\in L^2(\R)$. Moreover,
 \beq
    |K^{n}_M (h_1,h_2,h_3,h_4)|
    \lesssim
    \wti{M} \prod_{j=1}^4 \|h_j\|_2
 \eeq
 where the implicit constant is independent of $M$ and $h_j$, $j=1,2,3,4$.
\end{prop}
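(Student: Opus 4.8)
The plan is to reduce both functionals to an $L^4_x L^4_t$ Strichartz-type estimate by recognizing the oscillatory integrals as multiplier-twisted products of free Schrödinger evolutions, and then to handle the multiplier $M$ by splitting it off on the constraint set $\{\eta_1-\eta_2+\eta_3-\eta_4=0\}$. First I would treat $K^2_M$, since its phase $e^{it(\eta_1^2-\eta_2^2+\eta_3^2-\eta_4^2)}$ is exactly the one appearing in \eqref{eq:Qalt2}: on the constraint hyperplane, writing $\eta_4=\eta_1-\eta_2+\eta_3$, the multiplier is bounded pointwise by $\wti M$, so I want to factor it out and bound what remains by $\wti M \prod_j \|h_j\|_2$. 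The natural way is to absorb $M$ into one of the $h_j$ without changing its $L^2$ norm by more than $\wti M$; but $M$ depends on all four variables simultaneously, so it cannot simply be attached to a single factor. The key observation I would exploit is that after solving the $\delta$-constraint, $K^2_M$ is a genuine quadrilinear form whose kernel, once $M$ is replaced by its supremum $\wti M$, collapses to $\wti M\cdot\calQ(h_1,h_2,h_3,h_4)$, which is controlled by the Strichartz bound $\calQ(f,f,f,f)=\|e^{it\partial_x^2}f\|^4_{L^4_{[0,1]}L^4_x}\le P_1\|f\|_2^4$ cited after \eqref{eq:extension-estimate}, polarized to four distinct entries.

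To make the factoring of $M$ rigorous, I would use the Fourier-inversion trick on the weight. The difficulty is that $M$ is a general multiplier bounded by $\wti M$ on the constraint set but possibly unbounded off it; however, only its restriction to the constraint enters because of the $\delta$. So I would first restrict $M$ to the hyperplane $\{\eta_1-\eta_2+\eta_3-\eta_4=0\}$, obtaining a function $m(\eta_1,\eta_2,\eta_3)$ bounded by $\wti M$, and then seek a representation of $m$ as a superposition of product multipliers $\prod_j a_j(\eta_j)$ with controlled total mass. Concretely, I expect to write
\[
  m(\eta_1,\eta_2,\eta_3)=\int \wti M\, g_s(\eta_1)\,\ldots\, d\nu(s)
\]
via a partition or Fourier expansion so that each slice is a rank-one multiplier; the bound then follows term by term from the unweighted Strichartz estimate and integrating against a probability measure $d\nu$. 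The cleanest route avoiding this decomposition is to note that $|K^n_M|$ is dominated by $\wti M$ times the same integral with $M\equiv 1$, \emph{provided} one can pull the supremum out past the oscillatory phase — and this is exactly where the multilinear structure helps, since by Plancherel in the $t$-variable the phases combine into the space-time $L^4$ norm, on which pointwise domination of the integrand by $\wti M$ times a nonnegative kernel is legitimate after applying Cauchy–Schwarz.

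The treatment of $K^1_M$ is then parallel using \eqref{eq:Qalt1}: the phase $e^{-i(\eta_1^2-\eta_2^2+\eta_3^2-\eta_4^2)/(4t)}$ together with the factor $1/t$ is, up to the substitution $t\mapsto 1/(4t)$ and a change of variables, unitarily equivalent to the $K^2$ form, so the same bound transfers. I expect the \textbf{main obstacle} to be the justification that $M$ can be separated from the oscillatory kernel with only the cost $\wti M$ and no loss in the $L^2$ norms of the $h_j$; naively $M$ is not a product, and an $L^\infty\to L^\infty$ multiplier bound on a hyperplane does not automatically give a bounded quadrilinear operator. I anticipate resolving this by reducing, via the $\delta$ and Cauchy–Schwarz, to the diagonal quadrilinear Strichartz estimate $\calQ(f,f,f,f)\lesssim\|f\|_2^4$, where the positivity of the resulting kernel $|e^{it\partial_x^2}h_j|$ makes pointwise domination by $\wti M$ immediate, and then polarizing to recover the product $\prod_j\|h_j\|_2$.
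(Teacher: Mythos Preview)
Your plan has a genuine gap at its central step: you cannot ``pull the supremum out past the oscillatory phase'' and then invoke Strichartz. The Strichartz bound $\calQ(h,h,h,h)\lesssim\|h\|_2^4$ relies essentially on the oscillation in $\eta$; if you replace $M$ by $\wti M$ via the pointwise inequality $|M(\eta)|\le\wti M$ on the constraint hyperplane, you must first take absolute values in the $\eta$-integrand, and what remains is $\int_{\R^3}|h_1(\eta_1)h_2(\eta_2)h_3(\eta_3)h_4(\eta_1-\eta_2+\eta_3)|\,d\eta_1 d\eta_2 d\eta_3$, which is \emph{not} controlled by $\prod_j\|h_j\|_2$ (it is the integral of a convolution against an $L^2$ function and is generically infinite). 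Your alternative, decomposing the restricted multiplier $m(\eta_1,\eta_2,\eta_3)$ as a superposition of rank-one products with total mass $\lesssim\wti M$, also fails: an $L^\infty$ bound on $m$ gives no such representation in general. The ``positivity'' route you sketch is circular, since the nonnegative objects $|e^{it\partial_x^2}h_j(x)|$ live in physical space while $M$ acts on the Fourier side, and there is no way to dominate one pointwise while retaining the multiplicative structure of the other. Finally, your proposed reduction of $K^1_M$ to $K^2_M$ via $t\mapsto 1/(4t)$ does not preserve the interval $[0,1]$ of integration, so the two are not unitarily equivalent in the way you suggest.

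The paper's proof avoids all of this by first performing the $t$-integral \emph{before} taking absolute values. For $K^2_M$ this gives $\big|\int_0^1 e^{ita(\eta)}\,dt\big|\lesssim\min(1,|a(\eta)|^{-1})$ with $a(\eta)=\eta_1^2-\eta_2^2+\eta_3^2-\eta_4^2$; for $K^1_M$ one similarly splits $\int_0^1 t^{-1}e^{-ia(\eta)/(4t)}\,dt$ into the regimes $t\le|a(\eta)|$ (integration by parts) and $t\ge|a(\eta)|$. After this step the supremum $\wti M$ can legitimately be pulled out, and the remaining non-oscillatory integral is bounded directly by exploiting the factorization $a(\eta)=2(\eta_1-\eta_2)(\eta_2-\eta_3)$ on the constraint hyperplane together with elementary Cauchy--Schwarz arguments in the $\eta$-variables. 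No appeal to Strichartz is needed (nor would it help once absolute values are in place).
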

\begin{proof} By scaling, we can assume $\wti{M}=1$.
Let $a(\eta):=\eta_1^2-\eta_2^2 +\eta_3^2 -\eta_4^2$. We write
\begin{align}
|K^{1}_M| &\leq  \int\limits_{\R^4} \Big|\int\limits_{0}^1\frac{1}{t} e^{-ia(\eta)/(4t)} dt \Big|
        |M(\eta)| \prod_{j=1}^4 |h_j(\eta_j)|
        \delta(\eta_1 - \eta_2 + \eta_3 - \eta_4)    d\eta \nonumber\\
&\leq  \wti{M} \int\limits_{\R^4} \Big|\int\limits_{0}^1\frac{1}{t} e^{-ia(\eta)/(4t)} dt \Big|
   \prod_{j=1}^4 |h_j(\eta_j)|
        \delta(\eta_1 - \eta_2 + \eta_3 - \eta_4)    d\eta. \label{eq:K1M2}
\end{align}
Now we divide the $t$-integral into two pieces $t\leq |a(\eta)|$ where oscillations will be important
and $t\geq |a(\eta)|$. More precisely,
\begin{align}
\eqref{eq:K1M2} & \leq   \int\limits_{\R^4} \Big|\int\limits_{0}^{\min(1,|a(\eta)|)}\frac{e^{-ia(\eta)/(4t)} }{t} dt \Big|
         \prod_{j=1}^4 |h_j(\eta_j)|
        \delta(\eta_1 - \eta_2 + \eta_3 - \eta_4)    d\eta \label{eq:I1}\\
&+   \int\limits_{\R^4}  \int\limits_{\min(1,|a(\eta)|)}^1\frac{dt}{t}
   \prod_{j=1}^4 |h_j(\eta_j)|
        \delta(\eta_1 - \eta_2 + \eta_3 - \eta_4)    d\eta. \label{eq:I2}
\end{align}
Let us introduce the following functionals, which, for later flexibility, we define in a little bit
more generality than needed at the moment. For any (measurable) subset $A\subset\R^4$ let
\begin{align}
I_1(A) & :=  \int\limits_{A} \min(1, |a(\eta)|^{-1}) \prod_{j=1}^4 |h_j(\eta_j)|
                \delta(\eta_1 - \eta_2 + \eta_3 - \eta_4)    d\eta \label{eq:I1-loc}\\
I_2(A) & :=   \int\limits_0^1 \int\limits_{A\cap\{ |a(\eta)|\le t \}} \frac{1}{t}
                \prod_{j=1}^4 |h_j(\eta_j)|
                \delta(\eta_1 - \eta_2 + \eta_3 - \eta_4)    d\eta dt. \label{eq:I2-loc}
\end{align}
By Fubini-Tonelli
\begin{align*}
   \eqref{eq:I2} = &
   \int\limits_0^1 \frac{1}{t} \int\limits_{|a(\eta)|\leq t}
   \prod_{j=1}^4 |h_j(\eta_j)|
        \delta(\eta_1 - \eta_2 + \eta_3 - \eta_4)    d\eta dt =  I_2(\R^4).
\end{align*}
To estimate \eqref{eq:I1} we employ the following bound, which follows
by the change of variable $\tau= 1/t$ and then an
integration by parts.
 \begin{align} \label{eq:IBP}
\Big|\int\limits_{0}^{b}   \frac{1}{t} e^{-ia/(4t)} d t
\Big| \leq  \frac{8 |b|}{|a|}.
\end{align}
Using this one sees
 \bdm
    \eqref{eq:I1} \lesssim I_1(\R^4),
 \edm
hence
 \beq\label{eq:KMbound}
 |K^{1}_M|  \lesssim  I_1(\R^4) +I_2(\R^4)
 \eeq
We start to estimate the second term. Since $|a(\eta)|=2(\eta_1-\eta_2)(\eta_2-\eta_3)$
on the set $\eta_1 - \eta_2 + \eta_3 - \eta_4=0$, we
can estimate the $\eta$-integral in $I_2(\R^4)$ by
\begin{equation}\label{eq:eta-int}
\begin{split}
   \int\limits_{|a(\eta)|\leq t}
   \prod_{j=1}^4 |h_j(\eta_j)|
        &\delta(\eta_1 - \eta_2 + \eta_3 - \eta_4)    d\eta \leq \\
&\int\limits_{|\eta_1-\eta_2|\leq \sqrt{t}}
   \prod_{j=1}^4 |h_j(\eta_j)|
        \delta(\eta_1 - \eta_2 + \eta_3 - \eta_4)    d\eta\\
&+
\int\limits_{|\eta_2-\eta_3|\leq \sqrt{t}}
   \prod_{j=1}^4 |h_j(\eta_j)|
        \delta(\eta_1 - \eta_2 + \eta_3 - \eta_4)    d\eta
\end{split}
\end{equation}
The first integral on the right hand side of \eqref{eq:eta-int} can be bounded by
\begin{align*}
 \int\limits_{|\eta_1-\eta_2|\leq \sqrt{t}}   &
   |h_1(\eta_1)| |h_2(\eta_2)| |h_3(\eta_3)| |h_4(\eta_1-\eta_2+\eta_3)| d\eta_1 d\eta_2 d\eta_3\\
&\leq \|h_3\|_2\|h_4\|_2
 \int\limits_{|\eta_1-\eta_2|\leq \sqrt{t}}
   |h_1(\eta_1)| |h_2(\eta_2)|   d\eta_1 d\eta_2  \\
&\leq \|h_3\|_2\|h_4\|_2 \Big(\int\limits_{|\eta_1-\eta_2|\leq \sqrt{t}}
   |h_1(\eta_1)|^2   d\eta_1 d\eta_2\Big)^{1/2}\Big(\int\limits_{|\eta_1-\eta_2|\leq \sqrt{t}}
   |h_2(\eta_2)|^2   d\eta_1 d\eta_2\Big)^{1/2}\\
&=2\sqrt{t}\|h_1\|_2\|h_2\|_2\|h_3\|_2\|h_4\|_2,
\end{align*}
where we used Cauchy-Schwarz inequality first in the $d \eta_3$ integral, then in the $d\eta_1 d\eta_2$
integral. The second integral can be estimated similarly. Thus
\begin{align*}
 I_2(\R^4) \leq 4   \|h_1\|_2\|h_2\|_2\|h_3\|_2\|h_4\|_2 \int\limits_0^1 \frac{1}{\sqrt{t}} dt
 = 8   \|h_1\|_2\|h_2\|_2\|h_3\|_2\|h_4\|_2.
\end{align*}
To estimate $I_1(\R^4)$ we split the $\eta$-integral into two disjoint regions
\begin{align*}
A_1 :&\!= \{\eta\in\R^4: |\eta_1-\eta_2|\leq 1 \text{ or } |\eta_2-\eta_3|\leq 1\} \\
    &= \{\eta\in\R^4: |\eta_1-\eta_2|\leq 1 \}\cup \{\eta\in\R^4: |\eta_2-\eta_3|\leq 1\}\\
A_2 :&\!= \{\eta\in\R^4: |\eta_1-\eta_2|> 1 \text{ and } |\eta_2-\eta_3|> 1\}.
\end{align*}
Obviously,  $I_1(\R^4)= I_1(A_1)+ I_1(A_2)$. For $I_1(A_1)$, we bound the minimum
in \eqref{eq:I1-loc} by $1$ and then estimate the remaining
integral as in \eqref{eq:eta-int} but now for  $t=1$. This shows
 \bdm
    I_1(A_1) \lesssim  \|h_1\|_2 \|h_2\|_2\|h_3\|_2\|h_4\|_2 .
 \edm
On the other hand
\begin{equation}
    \begin{split}\label{eq:I1A2}
 I_1(A_2)
        & \le    \int\limits_{A_2}\frac{ |h_1(\eta_1)| |h_2(\eta_2)| |h_3(\eta_3)| |h_4(\eta_4)|}{|a(\eta)|}
            \delta(\eta_1-\eta_2+\eta_3-\eta_4) d\eta \\
        & \lesssim    \int\limits_{{|\eta_1-\eta_2|\ge 1}\atop{|\eta_2-\eta_3|\ge 1}}
            \frac{ |h_1(\eta_1)| |h_2(\eta_2)| |h_3(\eta_3)| |h_4(\eta_1-\eta_2+\eta_3)|}{|\eta_1-\eta_2|  |\eta_2-\eta_3|}
            d\eta_1 d\eta_2 d\eta_3\\
        &\leq   \|h_1\|_2\|h_2\|_2 \|h_4\|_2
            \Big(\int\limits_{{|\eta_1-\eta_2|\ge 1}\atop{|\eta_2-\eta_3|\ge 1}}
            \frac{   |h_3(\eta_3)|^2  }{|\eta_1-\eta_2|^2  |\eta_2-\eta_3|^2}
            d\eta_1 d\eta_2 d\eta_3\Big)^{1/2}\\
        &\lesssim   \|h_1\|_2\|h_2\|_2 \|h_3\|_2\|h_4\|_2.
    \end{split}
\end{equation}
This finishes the proof for $K_M^1$. The proof for $K_M^2$ is simpler. Using the inequality
 \beq\label{eq:easiIBP}
    \Big|\int\limits_0^1 e^{iat}dt\Big|\lesssim\min(1,|a|^{-1}),
 \eeq
one realizes $|K_M^2|\lesssim I_1(\R^4) $ and then proceeds as in the bound of $I_1(\R^4)$.
\end{proof}

A refinement of this proposition, when at least two of the functions, say, $h_j$ and $h_k$, have separated
supports, is
\begin{prop}\label{prop:KM-loc}
 Assume that $\wti{M}:= \sup_{\eta_1-\eta_2+\eta_3-\eta_4=0} M(\eta_1,\eta_2,\eta_3,\eta_4)<\infty$ and that
 there exist $l,k\in \{1,2,3,4\}$ with $\tau=\dist(\supp(h_l), \supp(h_k))\geq 1$. Then,
 \beq
    |K^{n}_M (h_1,h_2,h_3,h_4)|
    \lesssim
    \frac{\wti{M}}{\sqrt{\tau}} \prod_{j=1}^4 \|h_j\|_2,\,\,\,\,\,\,n=1,2.
 \eeq
 \end{prop}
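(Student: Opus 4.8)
The plan is to run the argument of Proposition~\ref{prop:KM} again, but now tracking how the separation hypothesis shrinks the relevant regions of integration. Recall from that proof that, after scaling to $\wti M=1$, one has the pointwise reductions $|K^{1}_M|\lesssim I_1(\R^4)+I_2(\R^4)$ (see \eqref{eq:KMbound}) and, via \eqref{eq:easiIBP}, $|K^{2}_M|\lesssim I_1(\R^4)$, where $I_1,I_2$ are the functionals \eqref{eq:I1-loc}--\eqref{eq:I2-loc}. Hence it suffices to prove
$$ I_1(\R^4),\ I_2(\R^4)\ \lesssim\ \frac{1}{\sqrt\tau}\prod_{j=1}^4\|h_j\|_2 $$
under the separation assumption, and then restore the factor $\wti M$ by scaling.

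The key observation is geometric. On the constraint set $\eta_1-\eta_2+\eta_3-\eta_4=0$ write $u:=\eta_1-\eta_2$ and $v:=\eta_2-\eta_3$, so that $a(\eta)=2uv$. Every pairwise difference $\eta_l-\eta_k$ then equals one of $\pm u,\pm v,\pm(u+v),\pm(u-v)$, and each of these is bounded in absolute value by $2\max(|u|,|v|)$. Since the integrand vanishes unless $\eta_l\in\supp(h_l)$ and $\eta_k\in\supp(h_k)$, on the support of the integrand we have $|\eta_l-\eta_k|\ge\tau$, and therefore $\max(|u|,|v|)\ge\tau/2$. Thus the whole domain of integration lies in $\{|u|\ge T\}\cup\{|v|\ge T\}$ with $T:=\tau/2\ge 1/2$. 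By the relabeling $(\eta_1,\eta_2,\eta_3,\eta_4)\mapsto(\eta_3,\eta_2,\eta_1,\eta_4)$, which interchanges $|u|$ and $|v|$, leaves $a(\eta)$ and the constraint invariant, and merely swaps $h_1\leftrightarrow h_3$ in the (symmetric) target bound, it is enough to estimate $I_1,I_2$ over the region $\{|u|\ge T\}$.

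On $\{|u|\ge T\}$ the bound for $I_2$ is immediate: the constraint $|a(\eta)|=2|u||v|\le t$ forces $|v|\le t/(2T)$, so integrating out $\eta_1$ (pairing $h_1$ with $h_4$ by Cauchy--Schwarz) and then $\eta_2,\eta_3$ over the short interval $|\eta_2-\eta_3|\le t/(2T)$ gives an $\eta$-integral $\lesssim (t/T)\prod_j\|h_j\|_2$; the remaining $\int_0^1 t^{-1}(t/T)\,dt=1/T$ yields $I_2\lesssim T^{-1}\prod_j\|h_j\|_2\lesssim\tau^{-1/2}\prod_j\|h_j\|_2$. For $I_1$ I would split $\{|u|\ge T\}$ according to $|v|\le 1$ and $|v|>1$. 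On $|v|\le 1$ I use the elementary bound $\min(1,|a|^{-1})\le(2|u||v|)^{-1/2}\le(2T|v|)^{-1/2}$; since $|v|^{-1/2}$ is integrable on $[-1,1]$, the same Cauchy--Schwarz steps (integrate $\eta_1$, then $\eta_2,\eta_3$) produce a factor $T^{-1/2}$. On $|v|>1$ we have $\min(1,|a|^{-1})=(2|u||v|)^{-1}$, and the computation \eqref{eq:I1A2} applies verbatim except that the weight $\int_{|u|\ge T}|u|^{-2}\,du=2/T$ now carries the gain, while $\int_{|v|>1}|v|^{-2}\,dv=2$; this again gives $T^{-1/2}$. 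Combining, $I_1\lesssim T^{-1/2}\prod_j\|h_j\|_2\lesssim\tau^{-1/2}\prod_j\|h_j\|_2$.

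The main point—and the only genuinely new input over Proposition~\ref{prop:KM}—is the geometric reduction to $\{|u|\ge T\}\cup\{|v|\ge T\}$, which converts separation of \emph{any} pair of supports into a lower bound on $\max(|u|,|v|)$. The one technical subtlety is that the clean bound $\min(1,|a|^{-1})\le(2|u||v|)^{-1/2}$ is useless for large $|v|$ (the weight $|v|^{-1/2}$ fails to be integrable at infinity); this is precisely why the split at $|v|=1$ is needed, using region-smallness for $|v|\le 1$ and the genuine square-integrable decay $|v|^{-2}$ for $|v|>1$. As in Proposition~\ref{prop:KM}, all constants are independent of $M$, hence of $\mu,\veps$, which is what the applications require.
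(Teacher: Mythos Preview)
Your proof is correct and follows essentially the same route as the paper's: the same $I_1/I_2$ decomposition, the same split of $I_1$ at $|v|=1$ with the bound $\min(1,|a|^{-1})\le|a|^{-1/2}$ on the near-diagonal piece and $|a|^{-1}$ on the far piece, and the same Cauchy--Schwarz computations. The one organizational difference is that the paper treats the pairs $(l,k)=(1,2)$ and $(l,k)=(1,3)$ separately (reducing the latter to the former via the triangle inequality $J_{1,3}^\tau\subset J_{1,2}^{\tau/2}\cup J_{2,3}^{\tau/2}$), whereas your observation that every pairwise difference on the constraint set equals $\pm u,\pm v,\pm(u\pm v)$, hence is bounded by $2\max(|u|,|v|)$, handles all $(l,k)$ at once; this is a mild streamlining rather than a different method.
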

\begin{proof}
Again we can assume $\wti{M}=1$.
For $A\subset\R^4$ let
 \beq\label{eq:I-loc}
    I(A):= \int\limits_{A} \Big|\int\limits_{0}^1\frac{1}{t} e^{-ia(\eta)/(4t)} dt \Big|
   \prod_{j=1}^4 |h_j(\eta_j)|
        \delta(\eta_1 - \eta_2 + \eta_3 - \eta_4)    d\eta.
 \eeq
Let $J_{l,k}^\tau:=\{\eta\in \R^4:|\eta_l-\eta_k|\geq\tau\}$.
Then $|K^1_M |\le I(J_{l,k}^\tau)$. By symmetry in \eqref{eq:I-loc}, it is enough to consider the cases  $(l,k)=(1,2)$ and $(l,k)=(1,3)$. First we consider the case $(l,k)=(1,2)$.

Recalling the definitions \eqref{eq:I1-loc} and \eqref{eq:I2-loc},
we can, as in the proof of Proposition~\ref{prop:KM}, bound $I(J_{1,2}^\tau)$ by
 \beq
    I(J_{1,2}^\tau) \lesssim I_1(J_{1,2}^\tau) + I_2(J_{1,2}^\tau)
 \eeq
In the integral defining  $I_2(J_{1,2}^\tau)$, we have
$t\geq |a(\eta)| =2|\eta_1-\eta_2||\eta_2-\eta_3|\geq 2\tau|\eta_2-\eta_3|$,
which implies $|\eta_2-\eta_3|\lesssim t/\tau$. This yields
\begin{align*}
I_2(J_{1,2}^\tau)&\lesssim \int\limits_0^1 \frac{1}{t}\int\limits_{|\eta_2-\eta_3|\lesssim t/\tau}\prod_{j=1}^4 |h_j(\eta_j)|
        \delta(\eta_1 - \eta_2 + \eta_3 - \eta_4)    d\eta dt\\
&\lesssim  \frac{1}{\tau}  \|h_1\|_2 \|h_2\|_2\|h_3\|_2\|h_4\|_2,
\end{align*}
where we obtained the last line as in the estimate of \eqref{eq:eta-int} with $\sqrt{t}$ replaced by $t/\tau$.

To estimate $I_1(J_{1,2}^\tau)$ let
\begin{align}
    A_1^\tau & :=  \{|\eta_1-\eta_2|\ge \tau\}\cap\{ |\eta_2-\eta_3| \le 1\} \\
    A_2^\tau & := \{|\eta_1-\eta_2|\ge \tau\}\cap\{ |\eta_2-\eta_3|> 1\}  .
\end{align}
Then, obviously,
 \bdm
    I_1(J_{1,2}^\tau) = I_1(A_1^\tau)+I_1(A_2^\tau) .
 \edm
Similar to \eqref{eq:I1A2}, we bound $I_1(A_2^\tau)$ by
\begin{align*}
        I_1(A_2^\tau)
    &\lesssim
        \|h_1\|_2\|h_2\|_2 \|h_4\|_2
            \Big(\int\limits_{{|\eta_1-\eta_2|\ge \tau}\atop{|\eta_2-\eta_3|\ge 1}}
            \frac{   |h_3(\eta_3)|^2  }{|\eta_1-\eta_2|^2  |\eta_2-\eta_3|^2}
            d\eta_1 d\eta_2 d\eta_3\Big)^{1/2} \\
    &\lesssim
        \frac{1}{\sqrt{\tau}} \|h_1\|_2\|h_2\|_2 \|h_3\|_2\|h_4\|_2
\end{align*}
For estimating $I_1(A_1^\tau)$ we bound the minimum in \eqref{eq:I1-loc} by $|a(\eta)|^{-1/2}$
 to see
 \begin{align*}
        I_1(A_1^\tau)
    &\lesssim
        \int\limits_{A_1^\tau} \frac{1}{|\eta_1-\eta_2|^{1/2}|\eta_2-\eta_3|^{1/2}}
         \prod_{j=1}^4 |h_j(\eta_j)|
        \delta(\eta_1 - \eta_2 + \eta_3 - \eta_4)    d\eta \\
    &\lesssim
        \frac{1}{\sqrt{\tau}} \int\limits_{|\eta_2-\eta_3|\le 1}
            \frac{|h_1(\eta_1)h_2(\eta_2)h_3(\eta_3)h_4(\eta_1-\eta_2+\eta_3)|}{|\eta_2-\eta_3|^{1/2}}
        d\eta_1 d\eta_2 d\eta_3 \\
    &\le
        \frac{\|h_1\|_2\|h_4\|_2}{\sqrt{\tau}}
        \int\limits_{|\eta_2-\eta_3|\le 1} \frac{|h_2(\eta_2)h_3(\eta_3)|}{|\eta_2-\eta_3|^{1/2}} d\eta_2 d\eta_3 \\
    &\le
        \frac{\|h_1\|_2\|h_4\|_2}{\sqrt{\tau}}
        \Big(\int\limits_{|\eta_2-\eta_3|\le 1} \frac{|h_3(\eta_3)|^2 d\eta_2 d\eta_3}{|\eta_2-\eta_3|^{1/2}}  \Big)^{1/2}
        \Big(\int\limits_{|\eta_2-\eta_3|\le 1} \frac{|h_2(\eta_2)|^2 d\eta_2 d\eta_3}{|\eta_2-\eta_3|^{1/2}}  \Big)^{1/2} \\
    &\lesssim
        \frac{1}{\sqrt{\tau}} \|h_1\|_2\|h_2\|_2\|h_3\|_2\|h_4\|_2,
 \end{align*}
where in the third inequality we used the Cauchy Schwarz bound with respect to
$d\eta_1$ and in the forth inequality with respect to the measure
$|\eta_2-\eta_3|^{-{1/2}} d\eta_2 d\eta_3$. This finishes the proof for $K^1_M$.

Again the proof for $K_M^2$ is simpler. Using \eqref{eq:easiIBP} and the separation condition
$|\eta_1-\eta_2|\ge \tau$ for all $(\eta_1,\eta_2)$ in the support of $h_1(\eta_1)h_2(\eta_2)$
one sees $|K_M^2|\lesssim I_1(J_{1,2}^\tau) $ and then proceeds as in the bound of $I_1(J_{1,2}^\tau)$.

Now we prove the case $(l,k)=(1,3)$, that is, we assume that the supports of $h_1$ and $h_3$ are
separated by $\tau$. In this case we have $|K^1_M|\le I(J_{1,3}^\tau)$ and $|K^2_M|\le I^1(J_{1,3}^\tau)$.
The triangle inequality yields $J_{1,3}^\tau \subset J_{1,2}^{\tau/2}\cup J_{2,3}^{\tau/2}$,
as subsets of $\R^4$, hence
 \bdm
    I(J_{1,3}^\tau) \le I(J_{1,2}^{\tau/2}) +  I(J_{2,3}^{\tau/2})
    \lesssim
    \frac{1}{\sqrt{\tau}} \prod_{j=1}^4 \|h_j\|_2,
 \edm
and similarly for $I^1(J_{1,3}^\tau)$.
This finishes the proof of the proposition.
\end{proof}

\section{Proof of exponential decay.} \label{sec:decay}

Let $f$ be a weak solution of the dispersion management equation.
Let
\beq
\|f\|_{\mu,\veps}:=\|e^{F_{\mu,\veps}(X)} f\|_2,
\eeq
with $F_{\mu,\veps}$ defined in \eqref{eq:Fmueps}.
The main step in our argument is to show that for some positive $\mu$, $\|f\|_{\mu,\veps}$ is
bounded in $\veps > 0$.

Fix $\tau>1$ and define, for an arbitrary function $f$,
$$f_\ll:=f\chi_{[-\tau/3,\tau/3]}, \quad  f_<:= f\chi_{[-\tau,\tau]},
\quad f_{>}:=f\chi_{[-\tau,\tau]^c}, \quad  f_\sim:=f_<-f_\ll.
$$

\begin{lem}\label{lem:flarge} Let $f$ be a weak solution of the dispersion management
equation for some $\omega>0$ with $\|f\|=1$.  Then
\begin{align*}
    \omega \|f_{>}\|_{\mu,\veps}
    &\lesssim
    \|f_{>}\|^3_{\mu,\veps} +e^{\mu\tau}\|f_{>}\|_{\mu,\veps}^2
    +\|f_>\|_{\mu,\veps}e^{2\mu\tau}\Big(\frac{1}{\sqrt{\tau}}
    +  \|f_\sim\|  \Big) +e^{3\mu\tau}\Big(\frac{1}{\sqrt{\tau}}  +\|f_\sim\| \Big).
\end{align*}
where the implicit constant does not depend on $\mu$,  $\veps$, and $\tau$.
\end{lem}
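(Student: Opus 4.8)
The plan is to run an Agmon-type argument: test the weak equation against a carefully weighted version of $f_{>}$ so that the left side becomes exactly $\omega\|f_{>}\|_{\mu,\veps}^2$, and then feed the resulting cubic expression into the \emph{uniform} bounds of Theorems~\ref{thm:twistedQ} and~\ref{thm:twistedQ-loc}. First I would record that, since $d_{av}=0$, the weak formulation reads $\omega\langle g,f\rangle=\calQ(g,f,f,f)$ for all $g\in H^1$. Both sides are continuous in $g$ with respect to $\|\cdot\|_2$: the left trivially, and the right because $\calQ(g,f,f,f)=\frac{1}{4\pi}K^1_M(g,f,f,f)$ with $M\equiv 1$, so $|\calQ(g,f,f,f)|\lesssim\|g\|_2\|f\|_2^3$ by Proposition~\ref{prop:KM}. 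Hence the identity extends to all $g\in L^2$. Fixing $\veps>0$ (so $F_{\mu,\veps}\le\mu/\veps$ is bounded and the weights below lie in $L^2$) I would insert $g=e^{2F_{\mu,\veps}(X)}f_{>}\in L^2$. Since the weight is real and $\overline{f_{>}}f=|f|^2\chi_{[-\tau,\tau]^c}$, the left side collapses to $\omega\|e^{F_{\mu,\veps}(X)}f_{>}\|_2^2=\omega\|f_{>}\|_{\mu,\veps}^2$.

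Next, writing $w=e^{F_{\mu,\veps}(X)}$ and pulling one factor of $e^{F_{\mu,\veps}}$ onto the first slot, the definition of $\calQ_{\mu,\veps}$ gives
\[
  \calQ\big(e^{2F_{\mu,\veps}(X)}f_{>},f,f,f\big)=\calQ_{\mu,\veps}\big(wf_{>},\,wf,\,wf,\,wf\big).
\]
I would then expand each of the last three arguments via $f=f_{>}+f_{<}$ and group the eight resulting terms according to the number $j\in\{0,1,2,3\}$ of the last three slots that carry $f_{<}$. The relevant slot-weight bounds are elementary from $F_{\mu,\veps}(x)\le\mu|x|$ and $\|f\|=1$: namely $\|wf_\ll\|_2\le e^{\mu\tau/3}$, $\|wf_{<}\|_2\le e^{\mu\tau}$, and $\|wf_\sim\|_2\le e^{\mu\tau}\|f_\sim\|$.

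For $j=0$ Theorem~\ref{thm:twistedQ} yields $\lesssim\|f_{>}\|_{\mu,\veps}^4$, and for $j=1$ it yields $\lesssim e^{\mu\tau}\|f_{>}\|_{\mu,\veps}^3$. For $j=2,3$ I would further split each $f_{<}=f_\ll+f_\sim$ and estimate the sub-terms by one of two mechanisms: if a sub-term contains a factor $f_\ll$ but no $f_\sim$, I exploit that $\supp f_\ll\subset[-\tau/3,\tau/3]$ and $\supp f_{>}\subset[-\tau,\tau]^c$ (the first slot) are separated by $\ge 2\tau/3$ and invoke Theorem~\ref{thm:twistedQ-loc} to gain a factor $1/\sqrt{\tau}$; if a sub-term contains a factor $f_\sim$, I use Theorem~\ref{thm:twistedQ} together with $\|wf_\sim\|_2\le e^{\mu\tau}\|f_\sim\|$, and since every such sub-term carries at least one power of $\|f_\sim\|\le 1$ (so $\|f_\sim\|^m\le\|f_\sim\|$) it contributes a $\|f_\sim\|$ factor. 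Bounding the accumulated weights by their worst case, $j=2$ collects to $\lesssim e^{2\mu\tau}\big(\tfrac1{\sqrt\tau}+\|f_\sim\|\big)\|f_{>}\|_{\mu,\veps}^2$ and $j=3$ to $\lesssim e^{3\mu\tau}\big(\tfrac1{\sqrt\tau}+\|f_\sim\|\big)\|f_{>}\|_{\mu,\veps}$. Summing over $j$ and dividing by $\|f_{>}\|_{\mu,\veps}$ (the case $f_{>}=0$ being trivial) produces exactly the claimed inequality, with constants independent of $\mu,\veps,\tau$ by Theorems~\ref{thm:twistedQ} and~\ref{thm:twistedQ-loc}.

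The main obstacle is the bookkeeping: one must verify that each group $j$ produces precisely the advertised power $e^{j\mu\tau}$ together with the correct smallness factor. The key structural observation making this work is that separation ($1/\sqrt\tau$) may only be claimed between $f_{>}$ and $f_\ll$, whose supports are genuinely $\ge 2\tau/3$ apart, whereas $f_{>}$ and $f_\sim$ are \emph{adjacent} at $|x|=\tau$ and must instead be controlled through the norm $\|f_\sim\|$; confusing these two would break the estimate. A minor technical caveat is that Theorem~\ref{thm:twistedQ-loc} requires the separation to be $\ge 1$, so the argument is run in the regime $\tau\ge 3/2$ where $2\tau/3\ge 1$ (the only regime of later interest), and the density step should be justified by the $L^2$-continuity of both sides of the weak equation in the test function.
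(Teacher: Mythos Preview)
Your proof is correct and follows essentially the same route as the paper: test the weak equation against $e^{2F_{\mu,\veps}}f_{>}$, rewrite the right side as $\calQ_{\mu,\veps}(h_{>},h,h,h)$ with $h=e^{F_{\mu,\veps}}f$, expand $h=h_{>}+h_{<}$ multilinearly, and then gain smallness either by separation of $h_{>}$ from $h_{\ll}$ (Theorem~\ref{thm:twistedQ-loc}) or by the factor $\|f_{\sim}\|$ (Theorem~\ref{thm:twistedQ}). The only cosmetic difference is that the paper splits \emph{one} of the $h_{<}$ factors into $h_{\ll}+h_{\sim}$ in the $j=2,3$ groups, whereas you split all of them; both yield the same bound, and your extra care in justifying the density step to $g\in L^2$ and the constraint $\tau\ge 3/2$ for Theorem~\ref{thm:twistedQ-loc} is welcome.
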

\begin{proof} Since $f$ is a weak solution of the dispersion management equation for some $\omega>0$, we have
$$\omega\la\varphi,f\ra=\calQ(\varphi,f,f,f), \text{ for any }\varphi\in L^2.
$$
Using this with $\varphi=e^{2F_{\mu,\veps}}f_>$, we obtain
\begin{align*}
\omega \|f_{>}\|_{\mu,\veps}^2 &= \calQ(e^{2F_{\mu,\veps}}f_{>},f,f,f)\\
&=\calQ_{\mu,\veps}(e^{F_{\mu,\veps}}f_{>},e^{F_{\mu,\veps}}f,e^{F_{\mu,\veps}}f,e^{F_{\mu,\veps}}f).
\end{align*}
Let $h:=e^{F_{\mu,\veps}}f$. Then
$$
\omega \|h_{>}\|^2=\calQ_{\mu,\veps}(h_{>},h,h,h).
$$
Writing $h=h_>+h_<$, and using the multilinearity of $\calQ_{\mu,\veps}$, we obtain
\begin{align} \label{eq:hbound}
\omega \|h_{>}\|^2&=\calQ_{\mu,\veps}(h_{>},h_>,h_>,h_>)+
\calQ_{\mu,\veps}(h_{>},h_<,h_<,h_<)\\
&+\calQ_{\mu,\veps}(h_{>},h_>,h_>,h_<)\nonumber
+\calQ_{\mu,\veps}(h_{>},h_>,h_<,h_<)\\
&+(\text{similar terms with permutations of the last three entrees})\nonumber
\end{align}
Note that by Theorem~\ref{thm:twistedQ}, we have
\begin{align*}
|\calQ_{\mu,\veps}(h_{>},h_>,h_>,h_>)|&\lesssim \|h_>\|^4,\\
|\calQ_{\mu,\veps}(h_{>},h_>,h_>,h_<)|&\lesssim \|h_>\|^3 \|h_<\|.
\end{align*}
To estimate the remaining terms, we will further split one of the $h_<$ they contain into $h_\ll+h_\sim$:
\begin{align*}
    |\calQ_{\mu,\veps}(h_{>},h_<,h_<,h_<)|
    &\leq
    |\calQ_{\mu,\veps}(h_{>},h_<,h_<,h_\ll)| +|\calQ_{\mu,\veps}(h_{>},h_<,h_<,h_\sim)|\\
    &\lesssim \frac{1}{\sqrt{\tau}}\|h_>\|\|h_<\|^2 \|h_\ll\| + \|h_>\|\|h_<\|^2 \|h_\sim\|,
\end{align*}
using Theorem~\ref{thm:twistedQ}, Theorem~\ref{thm:twistedQ-loc}, and the fact that the supports of $h_>$ and $h_\ll$ are separated by $2\tau/3$.
Similarly,
\begin{align*}
    |\calQ_{\mu,\veps}(h_{>},h_>,h_<,h_<)|
    &\leq
    |\calQ_{\mu,\veps}(h_{>},h_>,h_<,h_\ll)| +|\calQ_{\mu,\veps}(h_{>},h_>,h_<,h_\sim)|\\
    &\lesssim \frac{1}{\sqrt{\tau}}\|h_>\|^2\|h_<\| \|h_\ll\| + \|h_>\|^2\|h_<\| \|h_\sim\|
\end{align*}
Similar estimates hold for the  permurtations.
Using these estimates in \eqref{eq:hbound}, we obtain
\begin{align} \label{eq:hbound1}
    \omega \|h_{>}\|^2
    &\lesssim
    \|h_>\|^4 +\|h_>\|^3 \|h_<\|+\frac{1}{\sqrt{\tau}}\|h_>\|^2\|h_<\| \|h_\ll\|
    + \|h_>\|^2\|h_<\| \|h_\sim\| \\
    &+\frac{1}{\sqrt{\tau}}\|h_>\|\|h_<\|^2 \|h_\ll\| + \|h_>\|\|h_<\|^2 \|h_\sim\| \nonumber\\
\end{align}
Dividing both sides by $ \|h_>\|$ and using $h_<, h_\ll \leq e^{\mu\tau} f$,
$h_\sim\leq e^{\mu\tau}f_\sim$, and $\|f\|=1$, we obtain
\begin{align}
    \omega \|h_{>}\|
    &\lesssim
    \|h_>\|^3 +\|h_>\|^2 e^{\mu\tau} +\|h_>\|e^{2\mu\tau}\Big(\frac{1}{\sqrt{\tau}}
    +    \|f_\sim\|\Big)  + e^{3\mu\tau}\Big(\frac{1}{\sqrt{\tau}}  +   \|f_\sim\|\Big), \nonumber
\end{align}
which finishes the proof.
\end{proof}

\begin{proof}[Proof of Theorem~\ref{thm:main}]

Step 1. We will first determine $\tau > 1$ and we pick $\mu$ so that $e^{\mu\tau}=2$. We can rewrite the
bound from Lemma~\ref{lem:flarge} as
 (with the notation $\nu=\|h_>\|$)
\beq \label{eq:Gdef}
\big(\omega -\frac{C}{\sqrt{\tau}}-C\|f_\sim\|\big)\nu-C\nu^2-C\nu^3\leq C\Big(\frac{1}{\sqrt{\tau}}+\|f_\sim\|\Big).
\eeq

\noindent
Step 2. Let $G(\nu)=\frac{w}{2}\nu-C\nu^2-C\nu^3$. Let $\nu_\text{max}$ be the maxima of $G$ on $\R^+$.

\begin{figure}[htb]
\psfrag{a1}{$G_\text{max}$}
\psfrag{a0}{$G(\nu_0)$}
\psfrag{bm}{$\nu_\text{max}$}
\psfrag{b0}{$\nu_0$}
\psfrag{b1}{$\nu_1$}
\psfrag{0}{$0$}
\psfrag{text}{$[0,\nu_0]\cup[\nu_1,\infty)= G^{-1}\big([0,G(\nu_0)]\big)$}
\includegraphics[height=6cm,width=12cm]{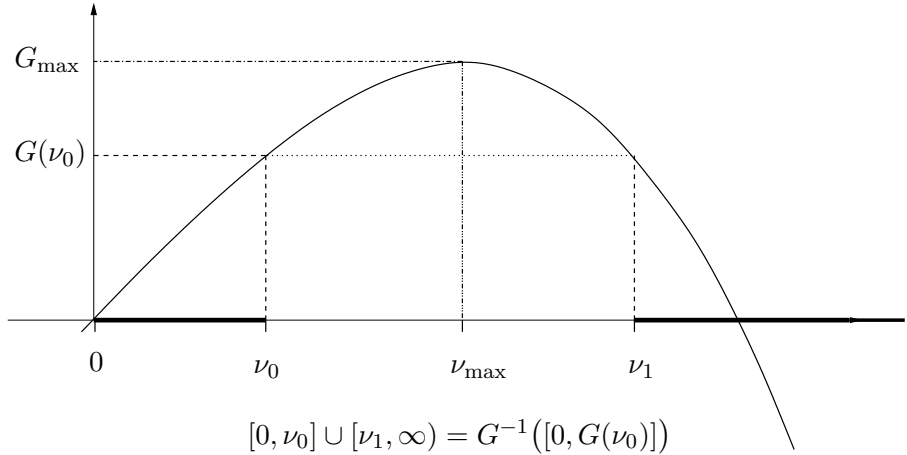}
\caption{Graph of $G(\nu)$ and the trapping region $G^{-1}\big([0,G(\nu_0)]\big)$.}
  \label{fig:G}
\end{figure}

\noindent
Step 3. Let  $\nu_0=\nu_\text{max}/2$, and pick $\tau>1$ so that
\begin{align*}
\text{i) } &  C\Big(\frac{1}{\sqrt{\tau}}+\|f_\sim\|\Big) \leq \min(\omega/2, G(\nu_0)),   \\
\text{ii) } & \|f_>\|\leq \nu_0/2.
\end{align*}
With this choice, we rewrite \eqref{eq:Gdef} as
\beq\label{eq:Gbound}
G(\|f_>\|_{\mu,\veps})\leq G(\nu_0),
\eeq
which is valid for any $\veps >0$. This is depicted in figure \ref{fig:G}.

\noindent
Step 4. Note that by ii) above and our choice of $\mu$ in step 1, we have
\beq\label{eq:start}
\|f_>\|_{\mu,1}\leq \|e^{\mu\frac{|x|}{1+|x|}}\|_\infty \|f_>\| \leq e^\mu\nu_0/2 < \nu_0.
\eeq
Finally since $\|f_>\|_{\mu,\veps}$ depends continuously on $\veps$ for $\veps>0$,
and \eqref{eq:start}, the inequality \eqref{eq:Gbound} shows that $\|f_>\|_{\mu,\veps}$ is
in the same connected component of $G^{-1}([0,G(\nu_0)])$, that is   $\|f_>\|_{\mu,\veps}\in [0, \nu_0]$ for all
$\veps>0$. This implies by monotone convergence theorem that
$$\|f_>\|_{\mu,0}=\sup_{\veps>0}\|f_>\|_{\mu,\veps}\leq \nu_0.$$
This shows that $e^{\mu |\cdot|} f \in L^2$.
With the obvious change of notation, a similar argument using Theorems~\ref{thm:twistedQ},
\ref{thm:twistedQ-loc} for $\tilde{\calQ}_{\mu,\veps}$ shows that
$e^{\tilde{\mu}|\cdot|}\widehat{f}\in L^2$,  for some $\tilde{\mu}>0$. Finally, the pointwise exponential bounds follows from the
one-dimensional Sobolev embedding theorem, or simply by the following
\begin{align*}
e^{\mu|x|}|f(x)|^2&=e^{\mu|x|} \Big|\int_x^\infty \frac{d}{ds}|f(s)|^2 ds\Big|\\
&\leq 2 \int_x^\infty e^{\mu|s|}|f(s)|   |f^\prime(s)| ds \leq 2 \|e^{\mu|\cdot|}f\|\,
\|f^\prime\|<\infty.
\end{align*}
Similarly one gets pointwise exponential decay of $\widehat{f}$.
\end{proof}



\textbf{Acknowledgements: }
It is  a pleasure to thank Vadim Zharnitsky
for instructive discussions on the dispersion management technique.

B. Erdo\smash{\u{g}}an and D. Hundertmark are partially supported by NSF grants
DMS-0600101 and DMS-0803120, respectively and Y.-R. Lee by the National Research Foundation of Korea (NRF)-grant 2009-0064945.
D. Hundertmark thanks  Max-Planck Institute for Physics of Complex Systems in Dresden
and the Max-Planck Institute for Mathematics in the Sciences in Leipzig for their warm
hospitality while part of this work was done.


\def\cprime{$'$}

\end{document}